\documentclass[a4paper,margin=1in]{article}
\usepackage{geometry}
\geometry{
    a4paper,
    total={170mm,257mm},
    left=20mm,
    top=20mm,
}

\usepackage{hyperref}

\usepackage{mathtools}
\usepackage{amsfonts}
\usepackage{amsthm}
\usepackage[usenames,dvipsnames]{xcolor}
\usepackage[toc,page]{appendix}
\usepackage{tikz}
\usepackage{esvect}
\usepackage{xcolor}
\usetikzlibrary{patterns}
\usetikzlibrary{decorations.pathreplacing}

% Metatagi do PDF
% \hypersetup{
%    colorlinks = true,
%    linkbordercolor = green,
%    pdfborder = true,
%    citecolor= blue,
%    pdfauthor={Andrzej pacuk, Piotr Sankowski, Karol Wegrzycki, Piotr Wygocki},
%    pdfsubject={Locality-Sensitive Hashing without False Negatives for l_p},
%    allbordercolors=green,
%    linkcolor = Red,
% }

\bibliographystyle{plain}

% Author macros::begin %%%%%%%%%%%%%%%%%%%%%%%%%%%%%%%%%%%%%%%%%%%%%%%%
\newtheorem{observation}{Observation}
\newtheorem{definition}{Definition}
\newtheorem{lemma}{Lemma}
\newtheorem{theorem}{Theorem}

\newcommand{\floor}[1]{\left\lfloor #1 \right\rfloor}
\newcommand{\ceil}[1]{\left\lceil #1 \right\rceil}
\newcommand{\dotpr}[2]{\left\langle #1 , #2 \right\rangle}
\newcommand{\abs}[1]{| #1 |}
\newcommand{\Bigabs}[1]{\Big\lvert #1 \Big\rvert}
\newcommand{\biggabs}[1]{\bigg\lvert #1 \bigg\rvert}
\newcommand{\dist}[1]{\lVert #1 \rVert}
\newcommand{\prob}[1]{\mathbb{P}\left[ #1 \right]}
\newcommand{\Ex}{\mathbb{E}}
\newcommand{\Exp}[1]{\exp\left( #1 \right)}

\newcommand{\poly}[1]{\mathrm{poly}( #1 )}
\newcommand{\RCNN}{(R,c)-\textsf{NN} }
\newcommand{\Oc}{\mathcal{O}}
\newcommand{\Rspace}{\mathbb{R}}
\newcommand{\Rdspace}{\mathbb{R}^d}

\newcommand{\Nspace}{\mathbb{N}}
\newcommand{\D}{\mathcal{D}}

\newcommand{\pfp}{\mbox{$\textsf{p}_{\textsf{fp}}$}}
\newcommand{\rhop}{\mbox{$\rho_p$}}
\renewcommand*{\Vec}[1]{\vv{#1}}

\newcommand{\footremember}[2]{%
    \footnote{#2}
    \newcounter{#1}
    \setcounter{#1}{\value{footnote}}%
}
\newcommand{\footrecall}[1]{%
    \footnotemark[\value{#1}]%
}

\title{Locality-Sensitive Hashing without False Negatives for $l_p$}%\thanks{This work was partially supported by ERC StG project PAAl 259515 and FET IP project MULTIPLEX 317532.}

\author{Andrzej Pacuk\footremember{MIMUW}{Institute of Informatics, University
of Warsaw, Poland} \and Piotr Sankowski\footrecall{MIMUW} \and Karol Wegrzycki\footrecall{MIMUW} \and Piotr
Wygocki\footrecall{MIMUW}}
\date{\texttt{[apacuk,sank,k.wegrzycki,wygos]@mimuw.edu.pl}}

% \texttt{[apacuk,sank,k.wegrzycki,wygos]@mimuw.edu.pl}}

\begin{document}

\maketitle

% TODO: remove latex warnings

\begin{abstract}

In this paper, we show a construction of locality-sensitive hash functions
without false negatives, i.e., which ensure collision for every pair of points
within a given radius $R$ in $d$ dimensional space equipped with $l_p$ norm when
$p \in [1,\infty]$. Furthermore, we show how to use these hash functions to
solve the $c$-approximate nearest neighbor search problem without false
negatives. Namely, if there is a point at distance $R$, we will certainly report
it and points at distance greater than $cR$ will not be reported for
$c=\Omega(\sqrt{d},d^{1-\frac{1}{p}})$.
The constructed algorithms work:
\begin{itemize}
    \item with preprocessing time $\Oc(n \log(n))$ and sublinear expected query
        time,
    \item with preprocessing time $\Oc(\mathrm{poly}(n))$ and expected query \newline time $\Oc(\log(n))$.
\end{itemize}
Our paper reports progress on answering the open problem presented by
Pagh~\cite{Pagh15}, who considered the nearest neighbor search without false
negatives for the Hamming distance.

\end{abstract}

\section{Introduction}

The \textit{Nearest Neighbor} problem is of major importance to a variety of
applications in machine learning and pattern recognition. Ordinarily,
points are embedded in $\Rdspace$, and distance metrics usually
measure similarity between points. Our task is the following:  given a preprocessed set of points
$S \subset \Rdspace$ and a query point $q \in \Rdspace$, find the point $v \in S$,
with the minimal distance to $q$. Unfortunately, the existence of an efficient
algorithm (i.e., whose query and preprocessing time would not depend
exponentially on $d$), would disprove the strong exponential time hypothesis~\cite{Pagh15,Williams2005}.
Due to this fact, we consider the \emph{$c$-approximate nearest neighbor}
problem: given a distance $R$, a query point $q$ and a constant $c > 1$, we need to find a
point within distance $c R$ from point $q$~\cite{Datar04locality-sensitivehashing}.
This point is called a \emph{$c R$-near neighbor} of $q$.

\begin{definition}
    Point $v$ is an \textit{$r$-near neighbor} of $q$ in metric $\mathcal{M}$
    iff $\mathcal{M}(q,v) \le r$.
\end{definition}

One of the most interesting methods for solving the $c$-approximate nearest neighbor problem in high-dimensional space is 
\textit{locality-sensitive hashing} (LSH). The algorithm offers a sub-linear query time and
a sub-quadratic space complexity. The rudimentary component on which LSH method relies
is \textit{locality-sensitive hashing function}. Intuitively, a hash
function is \textit{locality-sensitive} if the probability of collision is much higher for
``nearby'' points than for ``far apart'' ones. More formally:

% TODO: put all definitions in the section preliminaries

\begin{definition}
    A family $H = \{h: S \rightarrow U \}$ is called $(r,c,p_1,p_2)$-sensitive for
    distance $\mathcal{D}$ and induced ball $\mathcal{B}(q,r)= \{v : \mathcal{D}(q,v)  < r \}$, if for any $v,q \in S$:
    \begin{itemize}
        \item if $v \in \mathcal{B}(q,r)$ then $\prob{h(q) = h(v)} \ge p_1$,
        \item if $v \notin \mathcal{B}(q,c r)$ then $\prob{h(q) = h(v)} \le p_2$.
    \end{itemize}
    For $p_1 > p_2$ and $c>1$.
\end{definition}

Indyk and Motwani~\cite{motwani} considered
randomized $c$-approximate $R$-near neighbor~(Definition~\ref{rnear}).

\begin{definition}[The randomized $c$-approximate $R$-near neighbor or \RCNN]
    Given a set of points in a $P \subset \Rdspace$ and parameters $R>0$, $\delta > 0$.
    Construct a data structure $D$ such that for any query point $q$, if there
    exists a R-near neighbor of $q$ in $P$, $D$ reports some $c R$-near neighbor
    of $q$ in $P$ with probability $1-\delta$.

    \label{rnear}
\end{definition}

In this paper, we study guarantees for LSH based \RCNN such that for each query
point $q$, every close enough point $\dist{x-q}_p < R$ will be certainly returned, i.e., there are no false 
negatives.\footnote{$\dist{\cdot}_p$ denotes the standard $l_p$ norm for fixed $p$.}
In other words, given a set $S$ of size $n$ and a query point $q$,
the result is a set $P \subseteq S$ such that:
\begin{displaymath}
 \{x : \dist{x-q}_p  < r\} \subseteq P \subseteq \{x : \dist{x-q}_p  \le c r \}.
\end{displaymath}
Moreover, for each distant point ($\dist{x-q}_p > c R$), the probability of being returned is
bounded by \pfp  -- probability of false positives.
In~\cite{Pagh15} this type of LSH is called \textit{LSH without false
negatives}. The fact that the probability of false negatives is $0$ is our main
improvement over Indyk and Motwani algorithm~\cite{motwani}. Furthermore, Indyk
and Motwani showed that $p$-stable distributions (where $p \in (0,2]$) are
$(r,c,p_1,p_2)$-sensitive for $l_p$. We generalized their results on any
 distribution with mean $0$, bounded second and fourth moment and any $p
\in[1,\infty]$ (see Lemma~\ref{big}, for rigorous definitions).
Finally, certain distributions from this abundant class guarantee that points
within given radius will always be returned (see
Figure~\ref{fig:circle}).
Unfortunately, our results come with a price, namely $c \ge \max\{\sqrt{d},
d^{1-1/p}\}$.

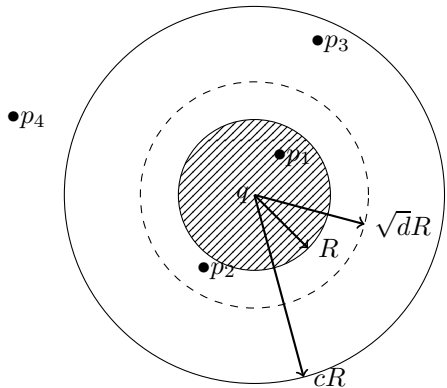
\begin{figure*}[ht!]

\centering

\begin{tikzpicture}[scale = 0.5]

\draw[pattern=north east lines] (0,0) circle (2cm);
\draw[dashed] (0,0) circle (3cm);
\draw(0,0) circle (5cm);

\begin{scope}[rotate=315]
    \draw[thick,->] (0,0) -- (2,0) node[anchor=west] {$R$};
\end{scope}
\begin{scope}[rotate=345]
    \draw[thick,->] (0,0) -- (3,0) node[anchor=west] {$\sqrt{d} R$};
\end{scope}

\begin{scope}[rotate=285]
    \draw[thick,->] (0,0) -- (5,0) node[anchor=west] {$c R$};
\end{scope}

\node[align=left] at (0,0)   {$q$ \;\;};
\node at (1,1)   {\textbullet $p_1$};
\node at (2,4)   {\textbullet $p_3$};
\node at (-1,-2) {\textbullet $p_2$};
\node at (-6,2) {\textbullet $p_4$};

\end{tikzpicture}
\caption{The presented algorithms guarantee that points in the dashed area ($p_1$) will be reported as neighbors.
    Points within the dotted circle ($p_2$) will be reported as neighbor with high probability.
    Points ($p_3$) within a distance $cR$  might be reported, but not necessarily.
    Points ($p_4$) outside circle $cR$ cannot be reported. The schema picture
    presents an example for the euclidean distance ($p=2$).
    }
\label{fig:circle}
\end{figure*}

\section{Related Work}
\subsection{Nearest neighbor in high dimensions}
Most common techniques for solving the approximate nearest neighbor search, such as
the spatial indexes or k-d trees~\cite{Bentley:1990:KDT:98524.98564} are designed to work well
for the relatively small number of dimensions. The query time for k-d trees is
$\Oc(n^{1-\frac{1}{d}})$ and when the number of dimensions increases the complexity basically
converges to $\Oc(n)$. For interval trees, query time $\Oc(\log^d{n})$
depends exponentially on the number of dimensions. The major
breakthrough was the result of Indyk and Motwani~\cite{motwani}. Their algorithm
has expected complexity of $\Oc(d n^{\frac{1}{c}})$ for any approximation
constant $c > 1$ and the complexity is tight for any metric
$l_p$ (where $p \in (0,2]$). Indyk and Motwani introduced the following LSH functions:
\begin{displaymath}
    h(v) = \floor{\frac{\dotpr{a}{v} + b}{r}},
\end{displaymath}
where $a$ is the $d$-dimensional vector of independent random variables from
a $p$-stable distribution and $b$ is a real number chosen uniformly from the range
$[0,r]$.

Our algorithm is based on similar functions and we prove compelling results for
more general family of distributions (we show bounds for any
distribution with a bounded variance and an expected value equal to $0$). Furthermore, our algorithm is correct
for any $p \in [1, \infty]$. Indyk and Motwani`s LSH algorithm
was showed to be optimal for $l_1$ metric. Subsequently, Andoni~et~al.~\cite{DBLP:journals/cacm/AndoniI08} showed near optimal results for $l_2$.
Recently, data dependant techniques have been used to further improve
LSH by Andoni and Razenshteyn~\cite{data-depended-hashing}. However, the constant $\rho$
in a query time $\Oc(n^\rho)$ remains:
\begin{displaymath}
    \rho = \frac{\log{p_1}}{\log{p_2}}.
\end{displaymath}
When a formal guarantee that $p_1 = 1$ is needed their algorithm does not apply.

\subsection{LSH without false negatives}
Recently, Pagh~\cite{Pagh15} presented a novel approach to nearest neighbor
search in Hamming space. He showed the construction of an efficient
locality-sensitive hash function family that guarantees collision for any
close points. Moreover, Pagh showed that bounds of his algorithm for $c r =
\log{n}/k$ (where $k \in \Nspace$) essentially match bounds of Indyk and
Motwani (differ by at most factor $\ln{4}$ in the exponent).
More precisely, he showed that the problem of false negatives can be avoided
in the Hamming space at some cost in efficiency.
He proved bounds for general values of $c$. This paper is an answer to his open
problem: whether is it possible to get similar results for other distance
measures (e.g., $l_1$ or $l_2$).

Pagh introduced the concept of an \textit{$r$-covering} family of hash function:
\begin{definition}
    For $A \subseteq \{0,1\}^d$, the Hamming projection family
    $\mathcal{H}_\mathcal{A}$ is $r$-covering if for every $x \in \{0,1\}^d$ with
    $\dist{x}_H \leq r$, there exist $h \in \mathcal{H}_\mathcal{A}$ such that $h(x)
    =  \textbf{0}$.
\end{definition}
Then, he presented a fast method of generating such an $r$-covering family.
Finally, he showed that the expected number of false positives is bounded by
$2^{r+1-\dist{x-y}_H}$.

\section{Basic Construction} \label{lsh_sec}

We will consider the $l_p$ metric for $p \in [1,\infty]$ and $n$ fixed points in $\Rdspace$ space.
Let $v$ be a $d$-dimensional vector of independent random variables drawn from distribution $\D$. We define a function $h_p$ as:
\begin{displaymath}
    h_p(x) = \floor{
    {\frac{ \dotpr{x}{v} }{r \rhop}}
    },
\end{displaymath}
where $ \dotpr{}{}$ is a standard inner product and $\rhop = d^{1-\frac{1}{p}}$.
The scaling factor $\rhop$ is chosen so that: $\dist{z}_1 \leq \rhop \dist{z}_p$.
The rudimentary distinction between the hash function $h_p$ and LSH is that we
consider two hashes equal when they differ at most by one.
In Indyk and Motwani~\cite{motwani} version of LSH, there were merely
probabilistic guarantees, and close points (say $0.99$ and $1.01$) could
be returned in different buckets with small probability. Since our motivation is
to find all close points with absolute certainty, we need to check the adjacent buckets as
well.

First, observe that for given points, the probability of choosing a hash
function that will classify them as equal is bounded on both sides as given by the following observations.
The proofs of these observations are in Appendices \ref{achar} and \ref{achar2}.

\begin{observation}[Upper bound on the probability of point equivalence]\label{char}
    \begin{displaymath}
    \prob{\abs{ h_p(x) - h_p(y) } \le 1}
        \le
            \prob{
                    \abs{ \dotpr{x-y}{v} }
                <
                    2 \rhop r
            }
    .
    \end{displaymath}
\end{observation}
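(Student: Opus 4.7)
The plan is to reduce the event on the left-hand side to the event on the right-hand side by a deterministic implication, after which the probability inequality follows by monotonicity of $\mathbb{P}$ on containment of events. In other words, I will show that the set of samples $v$ for which $\abs{h_p(x)-h_p(y)} \le 1$ is contained in the set of samples for which $\abs{\dotpr{x-y}{v}} < 2\rho_p r$; no probabilistic argument is needed beyond that.

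First I would isolate the elementary fact about the floor function: for any reals $a,b$, if $\abs{\lfloor a \rfloor - \lfloor b \rfloor} \le 1$ then $\abs{a-b} < 2$. This is immediate from writing $a = \lfloor a \rfloor + \{a\}$ and $b = \lfloor b \rfloor + \{b\}$ with fractional parts in $[0,1)$: the integer parts differ by at most $1$ in absolute value, and the fractional parts each lie in $[0,1)$, so their difference has absolute value strictly less than $1$, giving $\abs{a-b} < 2$ by the triangle inequality. The strictness is important because the right-hand side of the observation is a strict inequality.

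Next I would apply this with $a = \dotpr{x}{v}/(r\rho_p)$ and $b = \dotpr{y}{v}/(r\rho_p)$, so that $\lfloor a \rfloor = h_p(x)$ and $\lfloor b \rfloor = h_p(y)$ by definition. Assuming $\abs{h_p(x)-h_p(y)} \le 1$, the floor-function fact gives
\begin{displaymath}
\biggabs{\frac{\dotpr{x}{v}-\dotpr{y}{v}}{r\rho_p}} < 2,
\end{displaymath}
and multiplying through by $r\rho_p > 0$ together with linearity of the inner product yields $\abs{\dotpr{x-y}{v}} < 2\rho_p r$. This establishes the pointwise (in $v$) implication, and the observation follows by taking probabilities on both sides.

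There is no real obstacle here; the only thing to be slightly careful about is the boundary case of the floor function to ensure the final inequality is strict (matching the strict inequality in the statement), and ensuring that the scaling factor $r\rho_p$ is positive so that multiplying through preserves the inequality — both are automatic under the standing assumptions $r>0$ and $d\ge 1$.
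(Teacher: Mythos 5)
Your proof is correct and takes essentially the same route as the paper's: both establish the pointwise implication $\abs{h_p(x)-h_p(y)}\le 1 \implies \abs{\dotpr{x-y}{v}} < 2\rhop r$ via the elementary fact that $\abs{\floor{a}-\floor{b}}\le 1$ forces $\abs{a-b}<2$, and then conclude by monotonicity of probability under event containment. Your explicit justification of the floor-function fact via integer and fractional parts is a small addition the paper omits, but the argument is otherwise identical.
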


\begin{observation}[Lower bound on the probability of point equivalence]\label{char2}
    \begin{displaymath}
            \prob{|h_p(x) - h_p(y)| \le 1}
        \ge
            \prob{
                | \dotpr{x-y}{v} | < \rhop r
            }
    .
    \end{displaymath}
\end{observation}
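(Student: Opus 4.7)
The plan is to reduce the floor-function inequality on the left to an inequality on the underlying real numbers, and then rewrite that inequality using bilinearity of the inner product.

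More concretely, I would set $\alpha := \dotpr{x}{v}/(r\rhop)$ and $\beta := \dotpr{y}{v}/(r\rhop)$ so that $h_p(x) - h_p(y) = \floor{\alpha} - \floor{\beta}$. The key elementary fact I would invoke is a deterministic one about the floor function: for every pair of reals $\alpha,\beta$, if $|\alpha - \beta| < 1$ then $|\floor{\alpha} - \floor{\beta}| \le 1$. This is immediate by assuming WLOG $\alpha \ge \beta$, noting $\beta \le \alpha \le \beta + 1$, and applying $\floor{\cdot}$, which gives $\floor{\beta} \le \floor{\alpha} \le \floor{\beta} + 1$. Thus the event $\{|\alpha - \beta| < 1\}$ is a subset of the event $\{|\floor{\alpha} - \floor{\beta}| \le 1\}$.

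From here the proof is just bookkeeping. By linearity of the inner product, $\dotpr{x}{v} - \dotpr{y}{v} = \dotpr{x-y}{v}$, so the condition $|\alpha - \beta| < 1$ rewrites as $|\dotpr{x-y}{v}| < r\rhop$. Monotonicity of probability on the containment of events then yields
\begin{displaymath}
\prob{|h_p(x) - h_p(y)| \le 1} \;\ge\; \prob{|\alpha - \beta| < 1} \;=\; \prob{|\dotpr{x-y}{v}| < \rhop r},
\end{displaymath}
which is exactly Observation~\ref{char2}.

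I do not anticipate a genuine obstacle: the only subtle point is making sure the floor inequality uses the right side of the strict/non-strict boundary, and that the factor $\rhop r$ (rather than $2\rhop r$ as in Observation~\ref{char}) comes out correctly. The factor $2$ is lost here because we are going in the ``easy'' direction, using the inclusion of events rather than the converse; no accounting for rounding at both endpoints is necessary.
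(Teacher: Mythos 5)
Your proposal is correct and is essentially the paper's own argument: both rest on the deterministic implication $|\alpha-\beta|<1\Rightarrow|\floor{\alpha}-\floor{\beta}|\le 1$ applied to $\alpha=\dotpr{x}{v}/(r\rhop)$ and $\beta=\dotpr{y}{v}/(r\rhop)$, followed by linearity of the inner product and monotonicity of probability under event inclusion. The only difference is that you actually prove the floor-function fact, which the paper merely asserts.
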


Interestingly, using the aforementioned observations
we can configure a distribution $\D$ so the close points must end up in the same
or adjacent bucket. 
\begin{observation}[Close points have close hashes] \label{small}
For distribution $\D$ such that every $v_i \sim \D$: $-1 \le v_i \le 1$ and
for $x,y \in \Rdspace$, if $\dist{x-y}_p < r$ then $\forall_{h_p} |h_p(x) - h_p(y)| \le 1$.
\end{observation}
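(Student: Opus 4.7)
The plan is to reduce the claim to a direct application of H\"older's inequality on the random vector $v$, whose entries are now bounded in $[-1,1]$ by hypothesis.

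First, I would rewrite the quantity of interest. Setting $a = \frac{\dotpr{x}{v}}{r\rhop}$ and $b = \frac{\dotpr{y}{v}}{r\rhop}$, we have $h_p(x) - h_p(y) = \floor{a} - \floor{b}$. A standard elementary observation about the floor function is that $|\floor{a} - \floor{b}| \le 1$ whenever $|a-b| < 1$ (since if two floors differ by at least $2$, the underlying reals must differ by at least $1$). So it suffices to prove the deterministic bound
\begin{displaymath}
    \abs{\dotpr{x-y}{v}} < r \rhop
\end{displaymath}
for every realization $v \in [-1,1]^d$, assuming $\dist{x-y}_p < r$.

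The second step is H\"older's inequality applied to $z := x-y$ and $v$, with conjugate exponent $q$ satisfying $\tfrac{1}{p} + \tfrac{1}{q} = 1$:
\begin{displaymath}
    \abs{\dotpr{z}{v}} \le \dist{z}_p \, \dist{v}_q.
\end{displaymath}
Since $|v_i| \le 1$ for every coordinate, we get $\dist{v}_q = \left(\sum_{i=1}^{d} |v_i|^q\right)^{1/q} \le d^{1/q} = d^{1-1/p} = \rhop$. Combining this with the assumption $\dist{z}_p < r$ yields $|\dotpr{z}{v}| < r \rhop$, which is exactly the required bound.

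The only minor subtlety is to verify that this H\"older argument also covers the endpoints $p=1$ and $p=\infty$: for $p=1$ we use $\dist{v}_\infty \le 1 = d^{0} = \rho_1$, and for $p=\infty$ we use $\dist{v}_1 \le d = \rho_\infty$. Both are consistent with the definition $\rhop = d^{1-1/p}$, which is precisely why this scaling factor was chosen in the construction. I don't foresee any real obstacle; the statement is essentially a tight repackaging of H\"older's inequality plus the floor-function observation, and the role of $\rhop$ is to absorb the $\dist{v}_q$ factor uniformly across $p \in [1,\infty]$.
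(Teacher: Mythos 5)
Your proof is correct and follows essentially the same route as the paper's: reduce the claim to the deterministic bound $\abs{\dotpr{x-y}{v}} < \rhop r$ for every realization of $v$, then conclude via the floor-function fact (which is exactly the content of Observation~\ref{char2}). The only difference is how that bound is derived --- the paper chains $\abs{\dotpr{z}{v}} \le \dist{z}_1 \le \rhop \dist{z}_p$ through the $l_1$ norm, whereas you apply H\"older's inequality with the conjugate exponent $q$ and bound $\dist{v}_q \le d^{1/q} = \rhop$; the two estimates are equivalent here, though yours makes the choice of the scaling factor $\rhop$ somewhat more transparent.
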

\begin{proof}
   We know that $\dist{z}_1 \le \rhop \dist{z}_p$ and $\abs{v_i} \le 1$ (because
   $v_i$ is drawn from bounded distribution $\D$), so:
\begin{eqnarray*}
    \rhop \dist{x-y}_p \ge \dist{x-y}_1
    = \sum_i \abs{x_i - y_i}
    \ge \sum_i \abs{v_i(x_i - y_i)}
    &\ge \Bigabs{\sum_i v_i (x_i-y_i)}\\
    &= \abs{ \dotpr{x-y}{v} }
    .
\end{eqnarray*}
Now, when points are close in $l_p$:
\begin{eqnarray*}
    \dist{x-y}_p < r \iff \rhop \dist{x-y}_p < \rhop r
    \implies \abs{ \dotpr{x-y}{v} } < \rhop r
    .
\end{eqnarray*}
Next, by Observation~\ref{char2}:
\begin{displaymath}
    1 = \prob{ \abs{ \dotpr{x-y}{v} } < \rhop r}
    \le \prob{|h_p(x) - h_p(y)| \le 1}
    .
\end{displaymath}
Hence, the points will inevitably hash into the same or adjacent buckets.
\end{proof}

Now we will introduce the inequality that will help to bound the probability of false positives.

\begin{observation}[Inequality of norms in $l_p$]\label{observation-max}

    Recall that $\rhop = d^{1-\frac{1}{p}}$. For every $ z \in \Rdspace$ and $p \in [
    1,\infty ]$:

    \begin{displaymath}
        \dist{z}_2 \ge \frac{\rhop}{\max \{d^\frac{1}{2}, d^{1 - \frac{1}{p} } \} } \dist{z}_p
        .
    \end{displaymath}

\end{observation}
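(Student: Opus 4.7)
The plan is to split into two cases according to which term achieves the maximum in the denominator; the critical threshold is $p=2$, since $1-1/p \le 1/2$ iff $p \le 2$.

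\textbf{Case $p \in [1,2]$.} Here $\rho_p = d^{1-1/p} \le d^{1/2}$, so $\max\{d^{1/2}, d^{1-1/p}\} = d^{1/2}$ and the claim reduces to $\|z\|_2 \ge d^{1/2 - 1/p}\|z\|_p$. Since $1/p - 1/2 \ge 0$, I would rearrange this as the standard upper bound $\|z\|_p \le d^{1/p - 1/2}\|z\|_2$, which follows from H\"older's inequality applied to the dual exponents $2/p$ and $2/(2-p)$: writing $\|z\|_p^p = \sum_i |z_i|^p \cdot 1$ and applying H\"older gives $\|z\|_p^p \le \|z\|_2^p \cdot d^{1 - p/2}$, and raising to the $1/p$ power yields the result.

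\textbf{Case $p \in [2,\infty]$.} Here $\rho_p = d^{1-1/p} \ge d^{1/2}$, so $\max\{d^{1/2}, d^{1-1/p}\} = d^{1-1/p} = \rho_p$, and the claim collapses to $\|z\|_2 \ge \|z\|_p$. This is the standard monotonicity of $\ell_p$ norms in $p$ on $\mathbb{R}^d$: for $q \ge p$, the function $p \mapsto \|z\|_p$ is non-increasing, which one can see by normalizing $\|z\|_\infty = 1$ so that each $|z_i| \le 1$ and $|z_i|^q \le |z_i|^p$, then summing. The case $p = \infty$ is handled by taking a limit (or directly observing $\|z\|_\infty \le \|z\|_2$).

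The main ``obstacle'' is purely bookkeeping: one must check that the exponent $d^{1/2 - 1/p}$ appearing in the first case correctly matches the ratio $\rho_p / d^{1/2}$, and that the case boundary $p=2$ is consistent in both branches (at $p=2$ both cases give $\|z\|_2 \ge \|z\|_2$, confirming tightness there). No probabilistic or analytic machinery beyond H\"older/power-mean is required.
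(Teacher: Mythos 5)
Your proof is correct and follows essentially the same route as the paper: the same case split at $p=2$ and the same reduction to the standard comparison $\dist{z}_a \le \dist{z}_b \le d^{\frac{1}{b}-\frac{1}{a}}\dist{z}_a$, which the paper simply cites while you additionally derive it via H\"older and norm monotonicity.
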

This technical observation is proven in Appendix \ref{amax}.

The major question arises: what is the probability of false positives? In
contrast to the Indyk and Motwani~\cite{motwani}, we cannot use $p$-stable
distributions because these distributions are not bounded. We will present the proof
for a different class of functions.

\begin{lemma}[The probability of false positives for general distribution]\label{big}

Let $\D$ be a random variable such that $\Ex(\D) = 0$, $\Ex(\D^2) = \alpha^2$,
$\Ex(\D^4) \le 3 \alpha^4$ (for any $\alpha \in \Rspace^{+}$).
Define constant
$\tau_1 = \frac{2}{\alpha} \max \{ d^{\frac{1}{2}}, d^{1-\frac{1}{p}}\}$.

When $\dist{x-y}_p > c r$, $x,y \in \Rdspace$ and $c > \tau_1$ then:

\begin{displaymath}
    \pfp_1 = \prob{|h_p(x) - h_p(y)| \le 1} < 1-\frac{\left(1-\frac{\tau_1^2}{c^2}\right)^2}{3}
    ,
\end{displaymath}

for every metric $l_p$, where $p \in [1,\infty]$ ($\pfp_1$ is the probability of
false positive).

\end{lemma}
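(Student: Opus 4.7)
The plan is to reduce the statement, via Observation~\ref{char}, to an upper bound on $\Pr[|\langle x-y,v\rangle|<2\rho_p r]$, and then obtain this bound from a Paley--Zygmund style anti-concentration inequality applied to the scalar random variable $S:=\langle z,v\rangle$, where $z:=x-y$. The reason Paley--Zygmund is the right tool is precisely the hypothesis on the first four moments of $\mathcal{D}$: Paley--Zygmund needs a second moment lower bound and a fourth moment upper bound, and $\mathbb{E}(\mathcal{D})=0$, $\mathbb{E}(\mathcal{D}^2)=\alpha^2$, $\mathbb{E}(\mathcal{D}^4)\le 3\alpha^4$ supply exactly these.

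First I would compute the moments of $S$. Since the coordinates $v_i$ are i.i.d.\ with mean zero, $\mathbb{E}(S)=0$ and $\mathbb{E}(S^2)=\alpha^2\|z\|_2^2$. For $\mathbb{E}(S^4)$, expand $(\sum_i z_i v_i)^4$ and note that by independence and zero mean the only surviving terms are those in which every index appears with even multiplicity: the diagonal terms $\sum_i z_i^4\,\mathbb{E}(v_i^4)$ and the three pairings of type $(i,i,k,k)$ with $i\neq k$, each contributing $\alpha^4\sum_{i\neq k}z_i^2 z_k^2$. Using $\mathbb{E}(v_i^4)\le 3\alpha^4$ these combine cleanly into $\mathbb{E}(S^4)\le 3\alpha^4(\sum_i z_i^2)^2=3\alpha^4\|z\|_2^4$.

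Next, set $X:=S^2\ge 0$. Paley--Zygmund gives, for any $\theta\in[0,1]$,
\[
\Pr\!\left[X>\theta\,\mathbb{E}(X)\right]\ge (1-\theta)^2\,\frac{\mathbb{E}(X)^2}{\mathbb{E}(X^2)}\ge \frac{(1-\theta)^2}{3}.
\]
I would choose $\theta$ so that $\theta\,\mathbb{E}(X)=(2\rho_p r)^2$, i.e.\ $\theta=\frac{4\rho_p^2 r^2}{\alpha^2\|z\|_2^2}$. To make this $\theta$ small I would invoke Observation~\ref{observation-max}, which implies $\|z\|_2\ge\frac{\rho_p}{\max\{d^{1/2},d^{1-1/p}\}}\|z\|_p$, together with the hypothesis $\|z\|_p>cr$. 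A short substitution then yields $\theta<\frac{4\max\{d^{1/2},d^{1-1/p}\}^2}{\alpha^2 c^2}=\frac{\tau_1^2}{c^2}$, and the assumption $c>\tau_1$ ensures $\theta<1$, so Paley--Zygmund applies.

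Putting the pieces together, $\Pr[|S|\ge 2\rho_p r]\ge \tfrac{1}{3}(1-\tau_1^2/c^2)^2$, hence $\Pr[|S|<2\rho_p r]\le 1-\tfrac{1}{3}(1-\tau_1^2/c^2)^2$, and Observation~\ref{char} gives the claimed bound on $\Pr[|h_p(x)-h_p(y)|\le 1]$. The only step requiring real care is the fourth moment calculation: one must correctly enumerate the index pairings and use the assumption $\mathbb{E}(\mathcal{D}^4)\le 3\alpha^4$ exactly where the diagonal terms appear, so that the constant $3$ (and not a larger one) comes out in the denominator of the final expression.
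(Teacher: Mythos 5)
Your proposal is correct and follows essentially the same route as the paper's proof: reduce via Observation~\ref{char} to bounding $\prob{\abs{\dotpr{x-y}{v}} < 2\rhop r}$, lower-bound $\Ex(X^2)$ via Observation~\ref{observation-max} and $\dist{z}_p > cr$, and apply Paley--Zygmund to $X^2$ with $\theta = (2r\rhop)^2/\Ex(X^2) < \tau_1^2/c^2 < 1$. The only difference is that you explicitly verify $\Ex(X^4) \le 3\alpha^4 \dist{z}_2^4$ by enumerating the index pairings, a step the paper merely asserts as ``assumed''; this is a welcome clarification rather than a deviation.
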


\begin{proof}
    By Observation~\ref{observation-max}:
    \begin{displaymath}
        \dist{z}_2 \ge \frac{2 \dist{z}_p}{\alpha \tau_1} \rhop
    \end{displaymath}
    Subsequently, let $z=x-y$ and define a random variable $X = \dotpr{z}{v}$.
    Therefore:
    \begin{displaymath}
    \Ex(X^2) = \alpha^2 \dist{z}_2^2 \ge (\frac{2 \dist{z}_p}{\tau_1} \rhop)^2 >
    (2 r \rhop \frac{c}{\tau_1})^2.
    \end{displaymath}
    Because $\frac{c}{\tau_1} > 1$ we have $\theta = \frac{(2 r
    \rhop)^2}{\Ex{X^2}} < 1$. Variable $\theta$ and a random variable $X^2>0$
    satisfy Paley-Zygmunt inequality (analogously to~\cite{KHIN_IN}):
    \begin{eqnarray*}
        \prob{|h_p(x) - h_p(y)| > 1}
        & \ge & \prob{| \dotpr{z}{v} | \ge 2 r \rhop} \ge  \prob{X^2 > (2 r \rhop)^2} \\
        & \ge & \left(1-\frac{(2 r \rhop)^2}{\Ex(X^2)}\right)^2 \frac{\Ex(X^2)^2}{\Ex(X^4)}
        .
    \end{eqnarray*}
    Eventually, we assumed that $\Ex(X^4) \le 3 (\alpha \dist{z}_2)^4$:
    \begin{displaymath}
        \prob{|h_p(x) - h_p(y)| > 1}
    \ge \frac{\left(1-\frac{(2 r \rhop)^2}{\Ex(X^2)}\right)^2}{3}
        > \frac{\left(1-\frac{\tau_1^2}{c^2}\right)^2}{3}
        .
    \end{displaymath}
\end{proof}

Simple example of a distribution that satisfies both Observation~\ref{small} and
Lemma~\ref{big} is a uniform distribution on $(-1, 1)$ with a standard deviation $\alpha$ equal
to $\sqrt{\frac{1}{3}}$. Another example of such distribution is a discrete distribution with uniform values
$\{-1,1\}$.
As it turns out, Lemma~\ref{big01} shows that the discrete distribution leads to even better bounds.

\begin{lemma}[Probability of false positives for the discrete distribution]\label{big01}
Let $\D$ be a random variable such that $ \prob{\D=\pm 1} = \frac{1}{2}$.
Define constant
$\tau_2 = \sqrt{8}\max \{d^{\frac{1}{2}}, d^{1-\frac{1}{p}}\}$.
Then for every $p \in [1,\infty]$,
$x,y \in \Rdspace$ and $c > \tau_2$ such that $\dist{x-y}_p > cr$, it holds:

\begin{displaymath}
    \pfp_2 = \prob{|h_p(x) - h_p(y)| \le 1} < 1-\frac{(1-\frac{\tau_2}{c})^2}{2}
    .
\end{displaymath}
\end{lemma}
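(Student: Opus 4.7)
The plan is to follow the template of the proof of Lemma~\ref{big} but to apply the Paley--Zygmund inequality to $|X|$ instead of to $X^2$, exploiting the sharper $L^1$--$L^2$ moment comparison available for Rademacher sums. This linear (rather than quadratic) use of Paley--Zygmund is precisely what turns $\tau_1^2/c^2$ into $\tau_2/c$, and the sharp Khintchine constant is what lets the denominator shrink from $3$ to $2$.

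I would begin by setting $z = x-y$ and $X = \dotpr{z}{v} = \sum_i v_i z_i$, which is a Rademacher sum, so by orthonormality $\Ex(X^2) = \dist{z}_2^2$. Next I would invoke Khintchine's inequality at the sharp Szarek constant, $\Ex|X| \ge \dist{z}_2/\sqrt{2}$. Combining this with Observation~\ref{observation-max} (which gives $\dist{z}_2 \ge \sqrt{8}\,\rhop \dist{z}_p / \tau_2$) and the hypothesis $\dist{z}_p > cr$ would yield
\[
    \Ex|X| \ge \frac{2\, \rhop \dist{z}_p}{\tau_2} > \frac{2\, r \rhop c}{\tau_2},
\]
so the quantity $\theta := 2r\rhop/\Ex|X|$ satisfies $\theta < \tau_2/c < 1$ by the assumption $c > \tau_2$. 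Then I would apply Paley--Zygmund to the nonnegative variable $|X|$,
\[
    \prob{|X| > 2r\rhop} = \prob{|X| > \theta\, \Ex|X|} \ge (1-\theta)^2 \frac{(\Ex|X|)^2}{\Ex(X^2)} \ge \frac{(1-\tau_2/c)^2}{2},
\]
using the Khintchine bound $(\Ex|X|)^2 \ge \Ex(X^2)/2$ together with the monotonicity of $(1-\cdot)^2$ on $[0,1]$. Combining this with Observation~\ref{char} would then give $\pfp_2 \le \prob{|X| < 2 r \rhop} < 1 - (1 - \tau_2/c)^2/2$, exactly as claimed.

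The main obstacle, I expect, is the sharp constant $1/\sqrt{2}$ in Khintchine's inequality. A naive interpolation via H\"older from $\Ex(X^4) \le 3\dist{z}_2^4$ only yields $(\Ex|X|)^2 \ge \dist{z}_2^2/3$, which would produce the weaker denominator $3$ rather than the advertised $2$, and hence would fail to improve on Lemma~\ref{big} in that respect. One therefore has to invoke (or give a short dedicated proof of) Szarek's sharp $L^1$--$L^2$ Khintchine inequality for Rademacher sums; everything else is routine algebraic manipulation of Observations~\ref{char} and~\ref{observation-max}.
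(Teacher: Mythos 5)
Your proposal is correct and follows essentially the same route as the paper's own proof: both apply the Paley--Zygmund inequality to $|X|$ rather than $X^2$, use the sharp Khintchine bound $\Ex|X| \ge \dist{z}_2/\sqrt{2}$ for Rademacher sums (the paper cites Haagerup for this), and combine with Observations~\ref{char} and~\ref{observation-max} exactly as you describe. Your closing remark correctly identifies the sharp $L^1$--$L^2$ constant as the ingredient that yields the denominator $2$ instead of $3$.
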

\begin{proof}
    Because of Observation~\ref{observation-max} we have the inequality:
    $$
        \dist{z}_2 \ge \sqrt{8} \frac{\dist{z}_p}{\tau_2} \rhop
        .
    $$
    Let $z=x-y$ and $X = \dotpr{z}{v} $, be a random variable.
    Then:
    $$
        \prob{|h_p(x) - h_p(y)| > 1} \ge
        \prob{|X| > 2 r \rhop}
        .
    $$
    \textit{Khintchine} inequality~\cite{Haagerup1981} states
    $\Ex |X| \ge \frac{\dist{z}_2}{\sqrt{2}}$, so:
    $$
        \Ex(|X|) \ge \frac{\dist{z}_2}{\sqrt{2}} \ge
        \frac{2 \rhop \dist{z}_p}{\tau_2} > 2 r \rhop \frac{c}{\tau_2}
        .
    $$
    Note that, a random variable $|X|$ and $\theta = \frac{2 r \rhop}{\Ex(|X|)} <
    1$, satisfy the Paley-Zygmunt inequality (because $\frac{c}{\tau_2} > 1$), though:
    \begin{align*}
        \prob{h_p(x) - h_p(y)| > 1}
        \ge \left( 1-\frac{2 r \rhop}{\Ex(|X|)} \right)^2 \frac{\Ex(|X|)^2}{\Ex(|X|^2)}  \\
         >   \left(1-\frac{2 r \rhop}{2 r \rhop\frac{c}{\tau_2}}\right)^2
             \frac{1}{2} 
         =   \frac{\left(1-\frac{\tau_2}{c}\right)^2}{2}
         .
    \end{align*}
\end{proof}

Altogether, in this section we have introduced a family of hash functions $h_p$ which:
\begin{itemize}
    \item guarantees that, with an absolute certainty, points within the distance
        $R$ will be mapped to the same or adjacent buckets (see Observation~\ref{small}),
    \item maps ``far away'' points to the non-adjacent hashes with
        high probability (Lemma~\ref{big} and Lemma~\ref{big01}).
\end{itemize}
These properties will enable us to construct an efficient algorithm for solving
the $c$-approximate nearest neighbor search problem without false negatives.

\subsection{Tightness of bounds}\label{tight}

We showed that for two distant points
$x, y : \dist{x-y}_p > cr$, the probability of a collision is small when
$c = \max\{\rho_p, \sqrt{d}\}$. The natural question arises: Can we bound the
probability of a collision for points $\dist{x-y}_p > c'r$ for some $c' < c$?

We will show that such $c'$ does not exist, i.e., there always exists $\tilde{x}$
such that $\dist{\tilde{x}}_p$ will be arbitrarily close to $cr$,
so $\tilde{x}$ and $\Vec{0}$ will end up in the same or adjacent bucket with high probability.
More formally, for any $p \in [1,\infty]$, for
$h_p(x) = \floor{ \frac{ \dotpr{x}{v} }{r \rhop}}$, where coordinates of
$d$-dimensional vector $v$ are
random variables $v_i$, such that $-1 \le v_i \le 1$ with $\Ex(v_i)=0$. We will show that there
always exists $\tilde{x}$ such that
$\dist{\tilde{x}}_p \approx r \max\{\rho_p, \sqrt{d}\}$ and $|h_p(\tilde{x}) -
h_p(\Vec{0})|  \le 1$ with high probability.

%\begin{proof}

For $p \ge 2$ denote $x_0 = (r\rho_p-\epsilon,0, 0, \ldots ,0)$. We
have $\dist{x_0 -\Vec{0}}_p = r\rho_p - \epsilon$ and:
\begin{displaymath}
    |h_p(x_0) - h_p(\Vec{0})| = \biggabs{\floor{\frac{r\rho_p-\epsilon}{r\rho_p} \cdot v_1} - 0} \le 1.
\end{displaymath}

For $p \in [1,2)$, denote $x_1 = rd^{-\frac{1}{p}+\frac{1}{2}-\epsilon}\Vec{1}$.
We have $\dist{x_1}_p = rd^{\frac{1}{2}-\epsilon}$ and by applying
Observation~\ref{char2} for complementary probabilities:

\begin{eqnarray*}
    \prob{|h_p(x_1)-h_p(\Vec{0})| > 1} &\le& \prob{|\dotpr{x_1}{v}| \ge \rhop r}
    =
    \prob{|\dotpr{\Vec{1}}{v}| \ge d^{\frac{1}{2}+\epsilon}}\\
    &=& \prob{\biggabs{\frac{\sum_{i=1}^d v_i}{d}} \ge d^{-\frac{1}{2}+\epsilon}}
    \le 2\cdot \Exp{\frac{-d^{2\epsilon}}{2}}.
\end{eqnarray*}

The last inequality follows from Hoeffding~\cite{Hoeffding:1963} (see Appendix~\ref{hoef} for technical details).
%\end{proof}

So the aforementioned probability for $p\in [1,2)$ is bounded by an expression
exponential in $d^{2 \epsilon}$. Even if we would concatenate $k$ random hash
functions (see proof of Theorem~\ref{t1} for more details), the chance of
collision would be at least $(1 - 2e^{\frac{-d^{2\epsilon}}{2}})^k$. To bound
this probability, the number $k$ needs to be at least $\Theta{(e^{\frac{d^{2
\epsilon}}{2}})}$.
The probability bounds do not work for $\epsilon$ arbitrary
close to $0$:
we proved that introduced hash functions for $c=d^{1/2-\epsilon}$ do
not work (may give false positives).\footnote{However, one may try to obtain
tighter bound (e.g., $c = d^{1/2}/\log(d)$) or show that for every $\epsilon >
0$, the approximation factor $c=d^{1/2}-\epsilon$ does not work.}

Hence, to obtain a significantly better approximation factor $c$, one
must introduce a completely new family of hash functions.

\section{The algorithm}\label{algorithm}

In this section, we apply the LSH family introduced in Section~\ref{lsh_sec} to
construct an \textit{$c$-approximate} algorithm without false negatives.
To begin with, we will define a general algorithm that will satisfy our
conditions. Subsequently, we will show that complexity of the query is
sublinear, and it depends linearly on the number of dimensions.

\begin{theorem}\label{t1}
 For any $c>\tau$ and the number of iterations $k \ge 0$, there exists a
 \textit{$c$-approximate} nearest neighbor algorithm without false negatives for
 $l_p$, where $p \in [1,\infty]$:
 \begin{itemize}
    \item Preprocessing time: $\Oc(n (k d + 3^k))$,
    \item Memory usage: $\Oc(n 3^k)$,
    \item Expected query time: $\Oc(d (|P| + k + n \pfp^k))$.
 \end{itemize}
 Where $|P|$ is the size of the result and $\pfp$ is the upper bound of probability of false
 positives (note that $\pfp$ depends on a choice of $\tau$ from Lemma~\ref{big} or
 Lemma~\ref{big01}).
\end{theorem}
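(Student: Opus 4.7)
The plan is to amplify the basic hash family $h_p$ from Section~\ref{lsh_sec} by drawing $k$ independent hash functions $h_p^{(1)}, \ldots, h_p^{(k)}$ (each with its own random vector) and forming the combined key $H(x) = (h_p^{(1)}(x), \ldots, h_p^{(k)}(x)) \in \mathbb{Z}^k$. Because Observation~\ref{small} guarantees only that close points share the same \emph{or an adjacent} bucket in each coordinate, I would absorb this slack at preprocessing time: for every $v \in S$ and every offset $\delta \in \{-1,0,1\}^k$, insert $v$ into the hash-table bucket keyed by $H(v) + \delta$. A query then computes $H(q)$, fetches the single bucket keyed by $H(q)$, and reports each candidate after an explicit $\Oc(d)$ distance check against threshold $cr$.

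For the no-false-negatives guarantee, I would apply Observation~\ref{small} coordinate by coordinate: if $\dist{v - q}_p < R$ then $|h_p^{(i)}(v) - h_p^{(i)}(q)| \le 1$ holds deterministically for every $i$, so $\delta_i := h_p^{(i)}(q) - h_p^{(i)}(v)$ is a valid offset in $\{-1,0,1\}^k$, and $v$ was already placed into bucket $H(q)$ during preprocessing. This is the step that makes correctness unconditional rather than merely high-probability and distinguishes the construction from classical LSH.

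For the expected query time, I would exploit the mutual independence of the $k$ hash functions. A point $v$ with $\dist{v - q}_p > cr$ lands in bucket $H(q)$ iff $|h_p^{(i)}(v) - h_p^{(i)}(q)| \le 1$ for every $i$; by independence and by Lemma~\ref{big} (or Lemma~\ref{big01}, depending on the chosen distribution), this occurs with probability at most $\pfp^k$. Linearity of expectation bounds the expected number of far candidates in the bucket by $n \pfp^k$, while genuine $cR$-near candidates contribute at most $|P|$. Adding $\Oc(kd)$ for computing $H(q)$ and $\Oc(d)$ per verification gives expected query time $\Oc(d(|P| + k + n \pfp^k))$.

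Finally, the preprocessing cost per point is $\Oc(kd)$ for the $k$ inner products plus $\Oc(3^k)$ to enumerate and write all shifted keys, totaling $\Oc(n(kd + 3^k))$ preprocessing and $\Oc(n 3^k)$ stored entries. The main subtlety is a design choice rather than a calculation: one must commit to absorbing the adjacency slack at \emph{insertion} rather than at query time, since probing $3^k$ adjacent buckets per query would inflate the query bound. Once that choice is fixed, the argument is a direct combination of Observation~\ref{small} (for zero false negatives) and Lemmas~\ref{big}/\ref{big01} (for controlling false positives), and no further probabilistic machinery is required.
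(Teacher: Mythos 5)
Your proposal is correct and follows essentially the same route as the paper: concatenate $k$ independent copies of $h_p$, store each point in all $3^k$ buckets obtained by offsets in $\{-1,0,1\}^k$ at preprocessing time, use Observation~\ref{small} coordinatewise for the deterministic no-false-negative guarantee, and bound the expected number of far candidates by $n\,\pfp^k$ via independence and Lemma~\ref{big} or~\ref{big01}. Your write-up is in fact somewhat more explicit than the paper's about why the adjacency slack must be absorbed at insertion rather than at query time, but the construction and the accounting are the same.
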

\begin{proof}

    Let $g(x)\coloneqq(h_p^1(x), h_p^2(x), \ldots, h_p^k(x))$ be a hash function
 defined as a concatenation of $k$ random LSH functions presented in Section
 \ref{lsh_sec}.  We introduce the clustering $m : g(\mathbb{R}^d) \rightarrow
 2^{n}$, where each cluster is assigned to the corresponding hash value.  For
 each hash value $\alpha$, the corresponding cluster $m(\alpha)$ is $\{x: g(x) =
 \alpha\}$.

 Since we consider two hashes to be equal when they differ at most by one (see
 Observation~\ref{small}), for hash
 $\alpha$, we need to store the reference for every point that satisfies
 $\dist{\alpha - x} \le 1$.
 The number of such clusters is $3^k$, because the result of each hash function can vary
 by one of $\{-1, 0, 1\}$ and the number of hash functions is $k$. Thus, the memory usage is $\Oc(n
 3^k)$ (see Figure~\ref{fig:schema-picture}).

 \begin{figure*}[ht!]

\centering

\newcommand*{\xMin}{0}%
\newcommand*{\xMax}{10}%
\newcommand*{\yMin}{0}%
\newcommand*{\yMax}{6}%

\begin{tikzpicture}[scale = 0.5]

    % Litera K
    \node[above] at (5,7) {$k$};

    % Klamra
    \draw [decorate,decoration={brace,amplitude=10pt},rotate=270] (-6.2,0) --
    (-6.2,10.0);

    % Punkty niebieskie
    \foreach \x/\y in {0/4,1/2,2/1,3/3,4/3,5/2,6/4,7/2,8/3,9/1} {
      \fill[blue!10!white] (\x,\y-1) rectangle (\x+1,\y+2);
      \fill[blue]  (\x+0.5,\y+0.5) circle [radius=2.0pt];
     }
    % Punkty Zielone
    \foreach \x/\y in {0/3,1/3,2/0,3/2,4/4,5/3,6/3,7/3,8/2,9/2} {
        \fill[ForestGreen]  (\x+0.5,\y+0.5) circle [radius=2.0pt];
     }
    % Punkty Zielone
    \foreach \x/\y in {5/0} {
        \fill[red]  (\x+0.5,\y+0.5) circle [radius=2.0pt];
     }

    % Siatka
    \foreach \i in {\xMin,...,\xMax} {
        \draw [very thin,gray] (\i,\yMin) -- (\i,\yMax)  node
        [below] at (\i,\yMin) {};
    }
    \foreach \i in {\yMin,...,\yMax} {
        \draw [very thin,gray] (\xMin,\i) --
        (\xMax,\i) node [left] at (\xMin,\i) {};
    }

\end{tikzpicture}

\caption{Blue dots represent value of $g(q)$ for query. Green dots are
always distant by $1$, hence green and blue points are considered close. At
least one red dot is distant from blue dot by more than $1$, hence red dots will not be
considered close to blue. Thus, algorithm needs to check $3^k$ various possibilities.}
\label{fig:schema-picture}

\end{figure*}
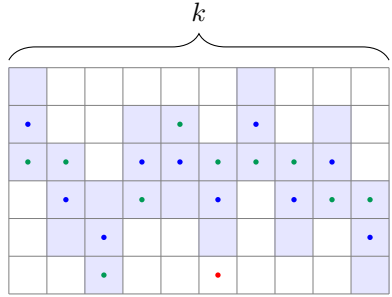

 To preprocess the data, we need to compute the value of the function $g$ for every
 point in the set and then put its reference into $3^k$ cells. Hence, the preprocessing
 time complexity equals $\Oc(n (k d + 3^k))$.

 Eventually, to answer a query, we need to compute $g(q)$ in time $\Oc(k d)$
 and then for every point in $\dist{g(x) - g(q)}_{\infty} \le 1$ remove
 distant points $\dist{x-q}_p > c R$.
 Hence, we need to look up every false-positive to check whether they are within
 distance $c r$ from the query point. We do that
 in expected time $\Oc(d (|P| + k + n \pfp^k))$, because $n \pfp^k$ is the expected
 number of false positives.
\end{proof}
The number of iterations $k$ can be chosen arbitrarily, so we will choose the
optimal value to minimize the query time. This gives the main result of this paper:
\begin{theorem}\label{th_main}
 For any $c>\tau$ and for large enough $n$, there exists a
 \textit{$c$-approximate} nearest neighbor algorithm without false negatives for
 $l_p$, where $p \in [1,\infty]$:
 \begin{itemize}
     \item Preprocessing time: $\Oc(n (\gamma d\log{n}  + (\frac{n}{d})^\gamma)) = \poly{n}$,
    \item Memory usage: $\Oc(n (\frac{n}{d})^\gamma)$,
    \item Expected query time: $\Oc(d (|P| + \gamma \log(n) + \gamma d))$.
 \end{itemize}
 Where $|P|$ is the size of the result, $\gamma = \frac{\ln{3}}{-\ln{\pfp}}$ and $\pfp$ and $\tau$ are chosen as in Theorem \ref{t1}.
\end{theorem}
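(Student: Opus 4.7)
The plan is to invoke Theorem~\ref{t1} with the number of iterations $k$ chosen to balance the two antagonistic terms in its query bound: the additive $k$, which grows linearly, and the expected false-positive count $n \pfp^k$, which shrinks exponentially in $k$. The natural balance point drives $n \pfp^k$ down to the dimension $d$, which gives
\begin{displaymath}
    k = \frac{\ln(n/d)}{-\ln \pfp} = \gamma \cdot \frac{\ln(n/d)}{\ln 3},
\end{displaymath}
so that $k = \Theta(\gamma \log(n/d))$. I would round up to the nearest integer; the clause ``for large enough $n$'' in the statement absorbs this rounding into the constants, and also ensures $k \ge 1$.

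With this choice of $k$, the key identities
\begin{displaymath}
    3^k = e^{\gamma \ln(n/d)} = (n/d)^{\gamma}, \qquad n \pfp^k = n \cdot e^{-\ln(n/d)} = d
\end{displaymath}
let me substitute directly into the three bounds of Theorem~\ref{t1}. The memory $\Oc(n 3^k)$ becomes $\Oc(n (n/d)^{\gamma})$. The preprocessing $\Oc(n(kd + 3^k))$ becomes $\Oc(n(\gamma d \log n + (n/d)^{\gamma}))$ after weakening $\log(n/d) \le \log n$ and absorbing the $1/\ln 3$ factor into the constant. The expected query time $\Oc(d(|P| + k + n \pfp^k))$ becomes $\Oc(d(|P| + \gamma \log n + \gamma d))$ by the same weakening, after observing that the bare $d$ term arising from $n \pfp^k = d$ is dominated by $\gamma d$ in the regime of interest.

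The ``$\poly{n}$'' claim for preprocessing follows because once $c > \tau$ is fixed, the probability $\pfp$ supplied by Lemma~\ref{big} or Lemma~\ref{big01} is a constant strictly smaller than $1$, so $\gamma$ is a positive constant and $(n/d)^{\gamma}$ is a genuine polynomial in $n$. I do not foresee a substantial obstacle: the only real design choice is the balancing value of $k$, after which every bound in Theorem~\ref{th_main} is a direct substitution into Theorem~\ref{t1}, modulo routine logarithmic and rounding slack. The main thing to be careful about is not being misled into optimizing $k$ to drive $n \pfp^k$ all the way to a constant, which would blow up $3^k$ unnecessarily; setting $n \pfp^k = d$ is what matches the statement.
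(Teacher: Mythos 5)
Your proposal is correct and follows essentially the same route as the paper: both choose $k = \Theta\!\left(\frac{\ln(n/d)}{-\ln \pfp}\right)$ (the paper uses $k = \lceil \ln(\frac{na}{d})/a\rceil$ with $a = -\ln\pfp$, driving $n\pfp^k$ to $d/a = \Oc(\gamma d)$ rather than exactly $d$, a difference only in the constant inside the logarithm) and then substitute into Theorem~\ref{t1}, handling the integer rounding of $k$ the same way. Your observation that the resulting $d$ term is dominated by $\gamma d$ is justified since $\pfp > \tfrac13$ in both lemmas forces $\gamma > 1$.
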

\begin{proof}

Denote $a = -\ln{\pfp}$, $b=\ln{3}$ and set $k = \ceil{ \frac{\ln{\frac{n a}{d} }}{a}  }$.

Let us assume that $n$ is large enough so that $k \ge 1$. 
Then using the fact that $x^{1/x}$ is bounded for $x>0$ we have:

\begin{displaymath}
    3^k \le 3 \cdot (3^{\ln{\frac{n a}{d}}})^{1/a}   
  = 3 \cdot (\frac{n a}{d})^{b/a}  
  = \Oc((\frac{n}{d})^{b/a})
  = \Oc((\frac{n}{d})^{\gamma}),
\end{displaymath}

\begin{displaymath}
    n \pfp^k 
  = n e^{-a k} \le n e^{-a \frac{\ln(\frac{n a}{d})}{a}}
  = \frac{d}{a} 
  = \Oc(d \gamma),
\end{displaymath}

\begin{displaymath}
    k = \Oc(\gamma \log(n)).
\end{displaymath}

Substituting these values in the Theorem \ref{t1} gives needed complexity guaranties.
\end{proof}

There are two variants of Theorems \ref{t1}, \ref{th_main} and \ref{th_lp}.  In
the first variant, we show complexity bounds for very general class of hashing
functions introduced in Lemma \ref{big}. In the second one, we show slightly better guaranties for hashing
functions which are generated using discrete probability distribution on $\{0,1\}$ introduced in Lemma \ref{big01}.
For simplicity the following discussion is restricted only to the second variant which gives better complexity guaranties. 
The definitions of constants $\pfp_2$ and $\tau_2$ used in this discussion are taken from Lemma \ref{big01} .
For a general case, i.e., $\pfp_1$ and $\tau_1$ taken from Lemma~\ref{big}, we get only slightly worse results.

The complexity bounds introduced in the Theorem \ref{th_main} can be simplified
using the fact that $\ln(x) < x-1$. Namely, we have:

\[\gamma  = \frac{\ln{3}}{-\ln{\pfp_2}} 
          = \frac{\ln{3}}{-\ln(1-\frac{(1-\frac{\tau_2}{c})^2}{2})} 
          < \frac{2\ln{3}}{(1-\frac{\tau_2}{c})^2}.\]  

However, the
preprocessing time is polynomial in $n$ for any constant $c$, it strongly
depends on the bound for probability $\pfp_2$ and $c$. Particularly when $c$ is
getting close to $\tau_2$, the exponent of the preprocessing time might be
arbitrarily large.

To the best of our knowledge, this is the first algorithm that will ensure
that no false negatives will be returned by the nearest neighbor
approximated search and does not depend exponentially on the number of dimensions.
Note that for given $c$, the parameter $\gamma$ is fixed. By
Lemma~\ref{big01}, we have: $\pfp_2 = 1-\frac{(1-\frac{\tau_2^2}{c^2})^2}{2}$, so:

\begin{displaymath}
    \lim_{c \to \infty} {\gamma} =\lim_{c \to \infty} \frac{\ln{3}}{-\ln{\pfp_2}} = \log_2{3} \approx 1.58
    .
\end{displaymath}

If we omit terms polynomial in $d$, the preprocessing time of the algorithm
from Theorem \ref{th_main} converges to $\Oc(n^{2.58})$ ($\Oc(n^{3.71})$ for
general case - see Appendix \ref{agen_pre}.).

\subsection{Light preprocessing}

Although the preprocessing time $\Oc(n^{2.58})$ may be reasonable when there are
multiple, distinct queries and the data set does not change (e.g., static databases,
pre-trained classification, geographical map). Still, unless the number of points is
small, this algorithm does not apply. Here, we will present
an algorithm with a light preprocessing time $\Oc(d n \log{n})$ and $\Oc(n \log{n})$
memory usage where the expected query time is $o(n)$.

The algorithm with light preprocessing is very similar to the algorithm described in
Theorem~\ref{t1}, but instead of storing references to the point in all $3^k$
buckets during preprocessing, this time searching for every point $x$ that matches
$\dist{x-q}_{\infty} \le 1$ is done during the query time.

The expected query time with respect to $k$ is $\Oc(d (|P| +
k + n \pfp^k ) + 3^k)$. During the preprocessing phase we only need to compute
$k$ hash values for each of $n$ points and store them in memory.
Hence, preprocessing requires $\Oc(kdn)$ time and uses $\Oc(nk)$ memory.

\begin{theorem}\label{th_lp}
 For any $c>\tau$ and for large enough $n$, there exists a
 \textit{$c$-approximate} nearest neighbor algorithm without false negatives for
 $l_p$, where $p \in [1,\infty]$:
 \begin{itemize}
     \item Preprocessing time: $\Oc(n d\log{n})$,
     \item Memory usage: $\Oc(n\log{n} )$,
     \item Expected query time: $\Oc(d (|P| +  n^{\frac{b}{a+b}}(\frac{b}{a})^{\frac{a}{b+a}}))$.
 \end{itemize}
 Where $|P|$ is the size of the result, $a = -\ln{\pfp}$, $b=\ln{3}$,  $\pfp$ and $\tau$ are chosen as in Theorem \ref{t1}.
\end{theorem}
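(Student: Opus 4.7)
The plan is to reuse the algorithmic framework of Theorem~\ref{t1} but shift the $3^k$ bucket-expansion work from preprocessing into the query phase. During preprocessing, for each of the $n$ points I would compute only the single hash vector $g(x) = (h_p^1(x), \ldots, h_p^k(x))$ and store $x$ in one hash-table bucket keyed by $g(x)$. This costs $\Oc(kdn)$ time and $\Oc(kn)$ memory, which matches the claimed $\Oc(dn\log n)$ and $\Oc(n\log n)$ as soon as $k = \Theta(\log n)$.

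To answer a query $q$, I would compute $g(q)$ in $\Oc(kd)$ and then iterate through all $3^k$ offsets $\alpha \in \{-1,0,1\}^k$, probing the bucket $g(q) + \alpha$ in the hash table. Observation~\ref{small} ensures that every $R$-near neighbor of $q$ lies in one of these buckets, so no false negative escapes. Bucket enumeration itself costs $\Oc(3^k)$ (one amortised $\Oc(1)$ hash-table probe per offset), and the expected number of candidate points retrieved across all buckets is $|P| + \Oc(n \pfp^k)$ by Lemma~\ref{big01} (or Lemma~\ref{big}), each verified in $\Oc(d)$. Summing gives the expected query time $\Oc(d(|P| + k + n\pfp^k) + 3^k)$, which is exactly the expression announced in the passage preceding the theorem.

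The remaining task is to choose $k$ to minimise this expression. Writing $\pfp = e^{-a}$ and $3 = e^{b}$ and differentiating $dn\,e^{-ak} + e^{bk}$ in $k$, the optimum is $k^\ast = \frac{1}{a+b}\ln(adn/b) = \Theta(\log n)$, which automatically justifies the preprocessing and memory bounds. Substituting $k^\ast$ gives
\[
    dn\pfp^{k^\ast} + 3^{k^\ast} = (dn)^{b/(a+b)}\!\left[(b/a)^{a/(a+b)} + (a/b)^{b/(a+b)}\right],
\]
and since the bracket is bounded by a constant depending only on $a,b$, while $d^{b/(a+b)} \le d$, this collapses to $\Oc(d\cdot n^{b/(a+b)}(b/a)^{a/(a+b)})$. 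The residual $dk = \Oc(d\log n)$ is dominated by this for $n$ large enough, which is also what guarantees $k^\ast \ge 1$. The main delicacy I expect is precisely this two-sided balance: since $3^k$ is now paid at every query instead of being amortised into preprocessing, $k$ must simultaneously keep both $n\pfp^k$ and $3^k$ sublinear, which caps $k$ at roughly $\ln(dn)/(a+b)$ rather than the larger value $\ln(na/d)/a$ used in Theorem~\ref{th_main} — and this compromise is exactly what produces the $n^{b/(a+b)}$ exponent in the query time.
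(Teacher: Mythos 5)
Your proposal is correct and follows essentially the same route as the paper: light preprocessing that stores each point in a single bucket keyed by $g(x)$, query-time enumeration of the $3^k$ adjacent buckets justified by Observation~\ref{small}, and a choice of $k$ that balances $3^k$ against the expected number $n\pfp^k$ of false positives, yielding the exponent $\frac{b}{a+b}$. The only (minor) difference is that you include the factor $d$ in the balance, taking $k=\Theta(\ln(adn/b)/(a+b))$ where the paper uses $k=\lceil \ln(na/b)/(a+b)\rceil$; this gives a marginally sharper query bound of $(dn)^{b/(a+b)}$, but means the stated $\Oc(n\log n)$ memory and $\Oc(nd\log n)$ preprocessing follow only when $d=n^{\Oc(1)}$, a caveat the paper's $d$-free choice of $k$ avoids.
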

\begin{proof}
We set the number of iterations $k = \ceil{ \frac{\ln{\frac{n a}{b}}}{a+b} }$.
Assume $n$ needs to be large enough
so that $k \ge 1$. Since $a$ is upper bounded for both choices of $\pfp$: 
\begin{displaymath}
    3^k \le 3 \cdot 3^{\frac{\ln(\frac{n a}{b})}{a+b}}= 3
    (\frac{n a}{b})^{\frac{b}{a+b}}
    =\Oc(n^{\frac{b}{a+b}})
    .
\end{displaymath}
Analogously:
\begin{displaymath}
    n \pfp^k = n (e^{-a})^k \le
    n e^{-a \frac{\ln(\frac{n a}{b})}{a+b}}=
    n \cdot \Big(\frac{b}{a}\Big)^{\frac{a}{a+b}} \cdot \Big(\frac{1}{n}\Big)^{\frac{a}{a+b}}=
    n^\frac{b}{a + b}\Big(\frac {b}{a}\Big)^{\frac{a}{a + b}}
    .
\end{displaymath}
Hence, for this choice of $k$ we obtain the expected query time is equal to:
\begin{align*}
    \Oc(d (|P| + k + n\pfp^k)) + 3^k &=\Oc(d(|P| + \log{n} + n^\frac{b}{a + b}\Big(\frac {b}{a}\Big)^{\frac{a}{a + b}}) + n^{\frac{b}{a+b}}) \\
    &=\Oc(d(|P| + n^\frac{b}{a + b}\Big(\frac {b}{a}\Big)^{\frac{a}{a + b}}).
\end{align*}
Substituting $k$, we obtain formulas for preprocessing time and memory usage.

\end{proof}

Eventually, exactly as previously for a general distribution from
Lemma~\ref{big}, when $c \to \infty$ we have: $a \to \ln{\frac{3}{2}}$ (see
Theorem \ref{th_lp} for the definition of constant $a$). Hence, for a general
distribution we have a bound for complexity equal to $\Oc(n^{\log_{4.5}{3}})
\approx \Oc(n^{0.73})$. For the discrete distribution from Lemma~\ref{big01},
the constant $a$ converges to $\ln{2}$. Hence, the expected query time
converges to $\Oc(n^{0.61})$.

\section{Conclusion and Future Work}

We have presented the $c$-approximate nearest neighbor algorithm without false
negatives in $l_p$ for all $p \in [1,\infty]$ and $c > \max\{\sqrt{d},
d^{1-1/p}\}$. Due to this inequality our algorithm can be used cognately to
the original LSH~\cite{motwani} but with additional guarantees about very close
points (one can set $R' = \sqrt{d} R$ and be certain that all points within
distance $R$ will be returned). In contrast to the original LSH, our algorithm
does not require any additional parameter tunning.

The future work concerns relaxing restriction on the approximation factor $c$
and reducing time complexity of the algorithm or proving that these
restrictions are essential. We wish to match the time complexities given by
\cite{motwani} or show that achieved bounds are optimal. We show the
tightness of our construction, hence to break the bound of
$\sqrt{d}$, one would need to introduce a new technique.

\section{Acknowledgments}

This work was supported by ERC PoC project PAAl-POC 680912 and FET project MULTIPLEX
317532. We would also like to thank Rafa{\l} Lata{\l}a for meaningful discussions.

\bibliography{bib}

\begin{thebibliography}{10}

\bibitem{DBLP:journals/cacm/AndoniI08}
Alexandr Andoni and Piotr Indyk.
\newblock Near-optimal hashing algorithms for approximate nearest neighbor in
  high dimensions.
\newblock {\em Commun. {ACM}}, 51(1):117--122, 2008.

\bibitem{data-depended-hashing}
Alexandr Andoni and Ilya Razenshteyn.
\newblock Optimal data-dependent hashing for approximate near neighbors.
\newblock In Rocco~A. Servedio and Ronitt Rubinfeld, editors, {\em Proceedings
  of the Forty-Seventh Annual {ACM} on Symposium on Theory of Computing, {STOC}
  2015, Portland, OR, USA, June 14-17, 2015}, pages 793--801. {ACM}, 2015.

\bibitem{Bentley:1990:KDT:98524.98564}
Jon~Louis Bentley.
\newblock K-d trees for semidynamic point sets.
\newblock In {\em Proceedings of the Sixth Annual Symposium on Computational
  Geometry}, SCG '90, pages 187--197, New York, NY, USA, 1990. ACM.

\bibitem{Datar04locality-sensitivehashing}
Mayur Datar and Piotr Indyk.
\newblock Locality-sensitive hashing scheme based on p-stable distributions.
\newblock In {\em In SCG ’04: Proceedings of the Twentieth Annual Symposium
  on Computational Geometry}, pages 253--262. ACM Press, 2004.

\bibitem{Haagerup1981}
Uffe Haagerup.
\newblock The best constants in the {K}hintchine inequality.
\newblock {\em Studia Mathematica}, 70(3):231--283, 1981.

\bibitem{Hoeffding:1963}
Wassily Hoeffding.
\newblock Probability inequalities for sums of bounded random variables.
\newblock {\em Journal of the American Statistical Association},
  58(301):13--30, March 1963.

\bibitem{motwani}
Piotr Indyk and Rajeev Motwani.
\newblock Approximate nearest neighbors: Towards removing the curse of
  dimensionality.
\newblock In {\em Proceedings of the Thirtieth Annual ACM Symposium on Theory
  of Computing}, STOC '98, pages 604--613, New York, NY, USA, 1998. ACM.

\bibitem{Pagh15}
Rasmus Pagh.
\newblock Locality-sensitive hashing without false negatives.
\newblock In Robert Krauthgamer, editor, {\em Proceedings of the Twenty-Seventh
  Annual {ACM-SIAM} Symposium on Discrete Algorithms, {SODA} 2016, Arlington,
  VA, USA, January 10-12, 2016}, pages 1--9. {SIAM}, 2016.

\bibitem{KHIN_IN}
Mark Veraar.
\newblock On {K}hintchine inequalities with a weight.
\newblock {\em Proceedings of the American Mathematical Society},
  138:4119--4121, 2010.

\bibitem{Williams2005}
Ryan Williams.
\newblock A new algorithm for optimal 2-constraint satisfaction and its
  implications.
\newblock {\em Theor. Comput. Sci.}, 348(2):357--365, December 2005.

\end{thebibliography}

\appendix

\section{Proof of Observation \ref{char}}\label{achar}
\begin{proof}
We will use, the fact that for any $x,y \in \Rspace$ we have
    $\abs{\floor{x} - \floor{y}}
        \le 1 \Rightarrow |x-y|
     < 2$.
Then the following implications hold:
\begin{eqnarray*}
\abs{h_p(x) - h_p(y)} \le 1
\iff &
\biggabs{ \floor{ \frac{\dotpr{x}{v}}{\rhop r} } -
          \floor{ \frac{\dotpr{y}{v}}{\rhop r} } } \le 1
& \implies
\Bigabs{ \frac{\dotpr{x}{y}}{\rhop r} -
         \frac{\dotpr{y}{v}}{\rhop r}  } < 2
\iff
         \\
\iff &
\abs{\dotpr{x-y}{v}} < 2 \rhop r
.
&
\end{eqnarray*}
So, based on the increasing property of the probability:
\begin{displaymath}
    \mathrm{if} \; A \subset B \; \mathrm{then} \; \prob{A} \le \prob{B}
    ,
\end{displaymath}
the inequality of the probabilities holds.
% http://www.math.uah.edu/stat/prob/Probability.html

\end{proof}

\section{Proof of Observation \ref{char2}}\label{achar2}

\begin{proof}
We will use the fact that for $x,y \in \Rspace :
\abs{x-y} < 1 \Rightarrow \abs{\floor{x} - \floor{y}} \le 1$).

\begin{eqnarray*}
    \Bigabs{\dotpr{x-y}{v}} < \rhop r
    \iff &
    \Bigabs{ \frac{\dotpr{x}{v}}{\rhop r} - \frac{\dotpr{x}{v}}{\rhop r} } < 1
    & \implies
    \biggabs{ \floor{\frac{\dotpr{x}{v}}{\rhop r}} -
        \floor{\frac{\dotpr{x}{v}}{\rhop r} }} \le 1
    \iff  \\
    \iff &
    \abs{h_p(x) - h_p(y)} \le 1
    &
\end{eqnarray*}

\end{proof}

\section{Proof of Observation \ref{observation-max}}\label{amax}
\begin{proof}
    For every $0 < b \le a$ vectors in $\Rdspace$ satisfy the inequality:
    \begin{equation}\label{ineq-ab}
        \dist{z}_a \le \dist{z}_b \le d^{(\frac{1}{b} - \frac{1}{a})} \dist{z}_a
        .
    \end{equation}

    For $p>2$ we have $\max \{ d^\frac{1}{2} , d^{1-\frac{1}{p}} \} =
    d^{1-\frac{1}{p}}$.  Then, using
    ineqaulity~\eqref{ineq-ab} for $a=p$ and $b=2$ we have:

    \begin{displaymath}
        \dist{z}_2 \ge \dist{z}_p = \frac{\rhop}{d^{1-\frac{1}{p}}} \dist{z}_p
        =
        \frac{\rhop}{\max \{d^\frac{1}{2}, d^{1 - \frac{1}{p} } \} } \dist{z}_p
    \end{displaymath}

    For $1 \le p \le 2$ we have $\max \{ d^\frac{1}{2},
    d^{1-\frac{1}{p}} \} = d^\frac{1}{2}$. Analogously by using
    inequality~\eqref{ineq-ab} for $a = 2$ and $b=p$:

    \begin{displaymath}
        \dist{z}_p \le d^{\frac{1}{p} - \frac{1}{2}} \dist{z}_2 =
        \dist{z}_2 \frac{d^{\frac{1}{2}}}{\rhop}
    \end{displaymath}

    Hence, by dividing both sides we have:

    \begin{displaymath}
        \dist{z}_p \frac{\rhop}{\max \{ d^\frac{1}{2}, d^{1-\frac{1}{p}} \}} \le
        \dist{z}_2
    \end{displaymath}

\end{proof}

\section{Hoeffding bound} \label{hoef}

Here we are going to show all technical details used in the proof in the Section~\ref{tight}.
Let us start with the Hoeffding inequality.
Let $X_1, \ldots, X_d$ be bounded independent random variables: $a_i \le X_i \le b_i$
and $\overline X$ be the mean of these variables
$ \overline X = \sum_{i=1}^{d}X_i / d $.
Theorem 2 of Hoeffding~\cite{Hoeffding:1963} states:

\begin{align*}
    \prob{\abs{\overline X - \Ex\left[ \overline X \right]}\geq t} &
    \leq 2 \cdot \Exp{-\frac{2d^2t^2}{\sum_{i=1}^d(b_i - a_i)^2}}
    .
\end{align*}

In our case, $D_1, \ldots, D_d$ are bounded by $a_i = -1 \le D_i \le 1 = b_i$ with $\Ex D_i=0$. Hoeffding
inequality implies:

\begin{align*}
    \prob{\left |\frac{\sum_{i=1}^{d} D_i}{d}  \right | \geq t} &
    \leq 2 \cdot \Exp{-\frac{2d^2t^2}{\sum_{i=1}^d(b_i - a_i)^2}}
    = 2 \cdot \Exp{-\frac{dt^2}{2}}
    .
\end{align*}

Taking $t=d^{-1/2 +\epsilon}$ we get the claim:

\begin{align*}
    \prob{\biggabs{\frac{\sum_{i=1}^{d} D_i}{d}} \geq d^{-1/2 +\epsilon}} &
    \leq 2 \cdot \Exp{-\frac{d^{2 \epsilon}}{2}}
    .
\end{align*}
\section{Preprocessing complexity bounds for the distributions introduced in Lemma \ref{big}}\label{agen_pre}
By Lemma~\ref{big}, we have: $\pfp_1 = 1-\frac{(1-\frac{\tau_1^2}{c^2})^2}{3}$, so:
\begin{displaymath}
    \lim_{c \to \infty} {\gamma}=\lim_{c \to \infty} \frac{\ln{3}}{-\ln{\pfp_1}} = {\frac{\ln{3}}{\ln{1.5}}}
        \approx {2.71}
    .
\end{displaymath}

If we omit terms polynomial in $d$, the preprocessing time of the algorithm
from Theorem \ref{th_main} converges to $\Oc(n^{3.71})$.

\end{document}